\newcommand{\crossings}{\ensuremath{\textrm{cr}}}
\newcommand{\gaps}{\ensuremath{\textrm{gaps}}}
\newcommand{\dummy}{\ensuremath{V_2^{\text{dm}}}}
\newcommand{\real}{\ensuremath{V_2^{\text{r}}}}
\newcommand{\oscmsg}{\textsc{OSCM-SG}}
\newcommand{\oscmkg}{\textsc{OSCM-$k$G}}
\newcommand{\oscm}{\textsc{OSCM}}
\newcommand{\realSuperscript}{\text{r}}
\newcommand{\dummySuperscript}{\text{dm}}
\newcommand{\vTwoReal}{V_2^\realSuperscript}
\newcommand{\permComesBefore}[3]{#2\prec_{#1}#3}
\newcommand{\ilpEquationSpacing}{\quad}
\newcommand{\floor}[1]{\lfloor #1 \rfloor}
\theoremstyle{definition}
\newtheorem{problem}{Problem}
\newtheorem{ilp}{Integer Linear Program}
\theoremstyle{plain}
\crefname{lem}{Lemma}{Lemmas}
\crefname{algorithm}{Algorithm}{Algorithms}
\crefname{algocf}{Algorithm}{Algorithms}
\title{Layered Graph Drawing with Few Gaps and Few Crossings}
\titlerunning{Layered Graph Drawing with Few Gaps and Few Crossings}
\author{Alexander Dobler}{TU Wien, Austria}{adobler@ac.tuwien.ac.at}{0000-0002-0712-9726}{Supported by the Vienna Science and Technology Fund (WWTF) under grant 10.47379/ICT19035}
\author{Jakob Roithinger}{TU Wien, Austria}{jakob.roithinger@student.tuwien.ac.at}{}{}
\authorrunning{A. Dobler and J. Roithinger}
\keywords{One-Sided Crossing Minimization, Layered Graph Drawing, Integer Linear Programming, Approximation Algorithms}
\begin{document}

\maketitle

\begin{abstract}
  We consider the task of drawing a graph on multiple horizontal layers, where each node is assigned a layer, and each edge connects nodes of different layers. Known algorithms determine the orders of nodes on each layer to minimize crossings between edges, increasing readability. Usually, this is done by repeated one-sided crossing minimization for each layer.
These algorithms allow edges that connect nodes on non-neighboring layers, called ``long'' edges, to weave freely throughout layers of the graph, creating many ``gaps'' in each layer. As shown in a recent work on hive plots -- a similar visualization drawing vertices on multiple layers -- it can be beneficial to restrict the number of such gaps. We extend existing heuristics and exact algorithms for one-sided crossing minimization in a way that restricts the number of allowed gaps.
The extended heuristics maintain approximation ratios, and in an experimental evaluation we show that they perform well with respect to the number of resulting crossings when compared with exact ILP formulations.
\end{abstract}

\section{Introduction}\label{sec:intro}
Drawing graphs is a non-trivial task, and many visualization approaches exist. One such approach, known as \emph{layered graph drawing}, draws the nodes on horizontal layers $L=\{L_1,L_2,\dots,L_\ell\}$, each edge connects nodes of different layers.
Sugiyama et al.\ pioneered the automation of such drawings \cite{sugiyama1981} in the well-known \emph{Sugiyama-framework} consisting of multiple steps. 
The first step assigns nodes to the $\ell$ layers such that nodes connected by an edge are on different layers (\cref{fig:layeredgraph}). In the next step so-called \emph{long edges} connecting nodes $u$ and $v$ of non-neighbouring layers $L_i$ and $L_j$, $i<j-1$, are replaced by a path of length $j-i$. The newly created \emph{dummy} nodes are assigned to layers $i+1,i+2,\dots,j-1$ (\cref{fig:dummynodes}). Original nodes are called \emph{real nodes}. After this process each edge connects nodes of adjacent layers.
In the next step edge crossings are reduced by permuting the nodes of each layer.
Usually, this is performed for neighboring layers, whereby the order of nodes in one layer is fixed and the other layer is permuted.
This is known as one-sided crossing minimization (\oscm) \cite{eades1994oscmNPcomplete}.
\oscm\ is performed iteratively, ``up'' and ``down'' the layers of the graph, i.e.\ for $i=1,2,\dots,\ell-1$ layer $L_i$ is fixed and layer $L_{i+1}$ is permuted.
Then, for $i=\ell,\ell-1,\dots,2$, layer $L_i$ is fixed and layer $L_{i-1}$ is permuted.
Several such ``up'' and ``down'' runs may be performed until reaching a termination condition.
The last step replaces dummy nodes by the original edges, and assigns $x$-coordinates to nodes. 

We are concerned with the second step of the above framework. In existing algorithms, dummy and real nodes are treated equally during the crossing minimization step. This can lead to many gaps in the resulting visualization in each layer. Formally, a gap in layer $L_i$ is a maximal consecutive sequence of dummy nodes (\cref{fig:gaps}). We argue that this hinders readability; thus, we extend algorithms for \oscm\ to only allow (1) \emph{side gaps}, that is, one gap on the left and one gap on the right of a layer, or (2) at most \emph{$k$ gaps} for each layer. 
Gaps have already been motivated by Nöllenburg and Wallinger for hive plots~\cite{DBLP:journals/jgaa/NollenburgW24}, which is essentially a circular variant of the Sugiyama framework with some additional features. Nöllenburg and Wallinger have introduced gaps at fixed positions of each layer, including the variant of side gaps. We extend their work by introducing the variant of $k$ gaps, where the gaps can be placed arbitrarily. Furthermore, we consider the problem from a more theoretical perspective, proving approximation ratios of our algorithms.
For both variants, side gaps and $k$ gaps, we propose approximations and exact algorithms that are experimentally evaluated.
\begin{figure}[!t]
     \centering
     \begin{subfigure}[b]{0.28\textwidth}
         \centering
         \includegraphics[width=\textwidth]{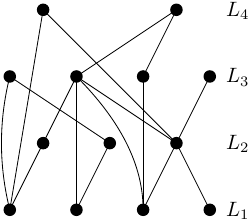}
         \caption{}
         \label{fig:layeredgraph}
     \end{subfigure}
     \hfill
     \begin{subfigure}[b]{0.28\textwidth}
         \centering
         \includegraphics[width=\textwidth]{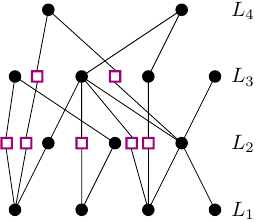}
         \caption{}
         \label{fig:dummynodes}
     \end{subfigure}
     \hfill
     \begin{subfigure}[b]{0.34\textwidth}
         \centering
         \includegraphics[width=\textwidth]{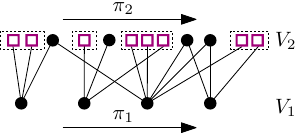}
         \caption{}
         \label{fig:gaps}
     \end{subfigure}
        \caption{(a) A layered graph drawing. (b) Long edges are replaced by paths of dummy nodes, shown as violet squares. (c) A drawing of two layers with the two node orderings $\pi_1$ and $\pi_2$ such that $\pi_2$ has $4$ gaps (shown with dashed rectangles), two of which are side gaps.}
        \label{fig:three graphs}
\end{figure}
\subparagraph*{Related work.}
The well-known Sugiyama framework \cite{sugiyama1981} for layered graph drawing serves as the main motivation of this work. As mentioned, a key step of this framework is to minimize crossings between two adjacent layers by permuting the order of nodes of one layer while keeping the second layer fixed, which is a known \NP-hard problem called one-sided crossing minimization (\oscm) \cite{eades1994oscmNPcomplete}. There exist heuristics with approximation guarantees \cite{eades1994oscmNPcomplete,sugiyama1981,DBLP:conf/gd/Nagamochi03}, FPT-algorithms parameterized by the number of crossings \cite{dujmovic2004oscmNPcompleteReference,DBLP:journals/ipl/KobayashiT16}, and exact algorithms based on integer linear programs \cite{DBLP:conf/gd/JungerM95}.

A restricted variant of \oscm\ has already been studied by Forster \cite{forster2004constrained}, where the relative order of node pairs can be restricted; thus the computed order has to conform to a given partial order. This is different to restrictions on gaps, which cannot be represented by partial orders. Further,  Nöllenburg and Wallinger \cite{DBLP:journals/jgaa/NollenburgW24} have considered gaps in a circular drawing style of graphs, called \emph{hive plots}. Our theoretical results are of independent interest to their work, and we extend their setting of gaps at fixed positions to gaps at arbitrary positions.

Gaps can also be regarded as groups of edges that can be bundled together. Edge bundling has already been applied in the context of layered graph drawing \cite{pupyrev2011edgeBundling}.

\subparagraph*{Structure.}
We state the formal problems for our \oscm-variants in \cref{sec:prelim}. In \cref{sec:oscmsg} and \cref{sec:oscmkg} we give polynomial time approximation algorithms for the respective problems. Exact ILP formulations are given in \cref{appendix:ilp}, and an experimental evaluation of selected algorithms is presented in \cref{section:experiments}. The source code is available online~\cite{Dobler_Roithinger_2025}.

\section{Preliminaries}\label{sec:prelim}
\subparagraph*{Permutations.} We treat permutations $\pi$ as lists of a set $X$. Two permutations $\pi,\pi'$ of disjoint sets can be concatenated by $\pi\star \pi'$. For $x,y\in \pi$ we write $x\prec_\pi y$ if $x$ comes before $y$ in $\pi$. For $X'\subseteq X$, $\pi[X']$ is the \emph{induced} permutation on $X'$, i.e., for all $x,y\in X'$, $x\prec_\pi y$ iff.\ $x\prec_{\pi[X']}y$.
Further, $\Pi(X)$ denotes the set of all permutations of $X$, and $\pi[i:j]$ is the set of elements in $\pi$ whose index is between $i$ and $j$ inclusively, using $1$-indexing.

\subparagraph*{One-sided crossing minimization.}
The problems discussed in this paper have as input a bipartite graph $G=(V_1\dot{\cup} V_2, E)$, $E\subset V_1\times V_2$. We set $n:=|V_1\cup V_2|$ and $m:=|E|$.
The classic one-sided crossing minimization (\oscm) problem is given $G$ and a permutation $\pi_1$ of $V_1$. The task is to find a permutation $\pi_2$ of $V_2$ that minimizes the number of edge crossings in a two-layer straight-line drawing of $G$ such that nodes in $V_1$ are ordered according to $\pi_1$ on the bottom layer and nodes in $V_2$ are ordered according to $\pi_2$ on the top layer. Such crossings can be determined combinatorially by $\pi_1$ and $\pi_2$; namely, edges $e=(u_1,u_2)$ and $e=(v_1,v_2)$, $u_i,v_i\in V_i$, cross w.r.t.\ $\pi_1$ and $\pi_2$ if $u_1\prec_{\pi_1}v_1$ and $u_2\succ_{\pi_2}v_2$, or $u_1\succ_{\pi_1}v_1$ and $u_2\prec_{\pi_2}v_2$.
Let $\crossings(G,\pi_1,\pi_2)$ be the number of crossings determined in such a way. Given $G$ and $\pi_1$, \oscm\ asks for $\pi_2$ minimizing $\crossings(G,\pi_1,\pi_2)$.
Throughout the paper we assume $\pi_1$ as fixed.
By slightly abusing the notation of $\crossings$, we furthermore define for $S,S'\in V_2$, $S\cap S'=\emptyset$ the value $\crossings(G,S,S')$ as follows. Let $\pi_2$ be any permutation such that all nodes in $S$ come before all nodes in $S'$. The value $\crossings(G,S,S')$ is the number of pairs $e,e'$ that cross w.r.t.\ $\pi_1$ and $\pi_2$ such that $e\cap S\ne \emptyset$ and $e'\cap S'\ne \emptyset$. %

We extend the \oscm\ problem by restricting the amount of allowed \emph{gaps}. For this, we note that $V_2$ consists of the disjoint union of $\real$ and $\dummy$, where $\real$ is the set of \emph{real} nodes and $\dummy$ is the set of \emph{dummy} nodes obtained by the preprocessing steps performed by the Sugiyama framework \cite{sugiyama1981}.
It is important to note that dummy nodes have degree one, which we exploit in all our algorithms.
A gap in $\pi_2$ is a maximal consecutive sequence of dummy nodes, and $\gaps(\pi_2)$ is the amount of gaps in $\pi_2$. Furthermore, a \emph{side gap} is a gap that either contains the leftmost or rightmost dummy nodes in $\pi_2$, $\pi_2$ is a \emph{side-gap permutation} if all of its gaps are side-gaps.
In our restricted \oscm\ variants, we either allow only side gaps in $\pi_2$, or at most $k$ gaps overall. The formal problems are given below, starting with side gaps.
\begin{problem}[\oscmsg]
    Given a bipartite graph $G=(V_1\dot{\cup}V_2,E)$ and a permutation $\pi_1$ of $V_1$, find a permutation $\pi_2\in \Pi(V_2)$ such that $\pi_2$ is a side-gap permutation and $\crossings(G_1,\pi_1,\pi_2)$ is minimal.
\end{problem}
\begin{problem}[\oscmkg]
    Given a bipartite graph $G=(V_1\dot{\cup}V_2,E)$, a permutation $\pi_1$ of $V_1$, and $k\in\mathbb{N}$, find a permutation $\pi_2\in \Pi(V_2)$ such that $\gaps(\pi_2)\le k$ and $\crossings(G_1,\pi_1,\pi_2)$ is minimal.
\end{problem}
Clearly, both problems are \NP-hard as they are equivalent to classic \oscm\ which is \NP-hard~\cite{eades1994oscmNPcomplete}, if we have no dummy nodes. %
\section{Approximation Algorithms for \oscmsg}\label{sec:oscmsg}
We show that any approximation algorithm for the classic \oscm\ problem can be transformed to an approximation algorithm for \oscmsg\ with the same approximation ratio. First, we show in the below lemma that there will never be edge crossings that involve two edges that are both incident to a dummy node in an optimal solution to \oscmsg\ and \oscmkg.
\begin{restatable}{lem}{lemswapdummy}
    \label{lemma:ordvirt}
    Given $\pi_1,\pi_2$ such that a pair of edges $e,e'\in V_1\times\dummy$ crosses, there is $\pi_2'$ such that (1) $\crossings(G,\pi_1,\pi_2')<\crossings(G,\pi_1,\pi_2)$, (2) $\gaps(\pi_2')\le \gaps(\pi_2)$, and (3) if $\pi_2$ is a side-gap permutation, so is $\pi_2'$.
\end{restatable}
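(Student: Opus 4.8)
The plan is to obtain $\pi_2'$ from $\pi_2$ by \emph{swapping the two dummy endpoints} of the crossing edges, and to argue that this single local modification already yields all three properties. Write $e=(u_1,u_2)$ and $e'=(v_1,v_2)$ with $u_2,v_2\in\dummy$. Since dummy nodes have degree one, $e$ is the only edge incident to $u_2$, $e'$ is the only edge incident to $v_2$, and $u_2\ne v_2$. Without loss of generality assume $u_1\prec_{\pi_1}v_1$; then, since $e$ and $e'$ cross, we must have $v_2\prec_{\pi_2}u_2$. Let $\pi_2'$ be the permutation obtained from $\pi_2$ by exchanging the positions of $u_2$ and $v_2$ and leaving all other nodes fixed.

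Conditions (2) and (3) then come essentially for free: because $u_2$ and $v_2$ are both dummy nodes, $\pi_2$ and $\pi_2'$ have a dummy node in exactly the same set of positions, so $\gaps(\pi_2')=\gaps(\pi_2)$ and $\pi_2'$ is a side-gap permutation whenever $\pi_2$ is.

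For (1) I would compare the two crossing counts pair by pair. The only edge pairs whose crossing status can differ between $\pi_2$ and $\pi_2'$ are (i) the pair $\{e,e'\}$, and (ii) pairs $\{e,f\}$ and $\{e',f\}$ where the $V_2$-endpoint $w_2$ of $f$ lies strictly between $v_2$ and $u_2$ in $\pi_2$; every edge whose $V_2$-endpoint lies outside the interval from $v_2$ to $u_2$ keeps its order relative to both $u_2$ and $v_2$ and is therefore unaffected. The pair $\{e,e'\}$ crosses in $\pi_2$ but not in $\pi_2'$ (the $\pi_2$-order of $u_2,v_2$ flips while the $\pi_1$-order of $u_1,v_1$ does not), contributing $-1$. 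For each $f=(f_1,w_2)$ as in (ii), moving $u_2$ to the left past $w_2$ and moving $v_2$ to the right past $w_2$ change the crossings $\{e,f\}$ and $\{e',f\}$ in opposite directions; a case distinction on the position of $f_1$ relative to $u_1$ and $v_1$ in $\pi_1$ (the cases $f_1\prec_{\pi_1}u_1$, $f_1=u_1$, $u_1\prec_{\pi_1}f_1\prec_{\pi_1}v_1$, $f_1=v_1$, $v_1\prec_{\pi_1}f_1$) shows the combined change is $0$, $-1$, $-2$, $-1$, $0$ respectively, hence at most $0$ in all cases. Summing over all affected pairs, $\crossings(G,\pi_1,\pi_2')-\crossings(G,\pi_1,\pi_2)\le -1<0$, which is (1).

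The only real content is the case analysis in (ii); the point to get right is that, because $u_1\prec_{\pi_1}v_1$, whenever shifting $u_2$ increases the crossings of $e$ with $f$, the corresponding shift of $v_2$ decreases the crossings of $e'$ with $f$ by at least as much, so no intermediate edge $f$ can ever make the swap worse. Everything else — that conditions (2) and (3) hold because a swap of two dummy nodes preserves the dummy positions, and that edges outside the swapped interval are untouched — is immediate from the definitions.
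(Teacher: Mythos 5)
Your proposal is correct and follows essentially the same route as the paper: swap the two crossing dummy endpoints, observe that the set of positions occupied by dummy nodes is unchanged (giving (2) and (3) for free), and show the crossing count strictly drops. The only difference is bookkeeping—you compute the net change per intermediate edge via an explicit case analysis on $f_1$, while the paper argues via a correspondence that maps every crossing of $\pi_2'$ to a distinct crossing of $\pi_2$ and notes that the pair $\{e,e'\}$ is lost; both are valid and yield the same conclusion.
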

\begin{proof}
    Let $\pi_1,\pi_2$ be such that a pair of edges $e,e'\in V_1\times \dummy$ crosses. Let $e=(u,v)$ and let $e'=(u',v')$, hence $v,v'\in \dummy$. We show that exchanging $v$ and $v'$ in $\pi_2$ does not increase the number of crossings. The statement then follows by induction, repeatedly exchanging such $v$ and $v'$ in $\pi_2$.
    Let $\pi_2'$ be the permutation after exchanging $v$ and $v'$.
    We consider different cases (note that $v$ and $v'$ have degree one).
    \begin{itemize}
        \item An edge $e''$ that crosses with both $e$ and $e'$ with respect to $\pi_1$ and $\pi_2'$, also crosses with both with respect to $\pi_1$ and $\pi_2$.
        \item An edge $e''$ that crosses with $e$ with respect to $\pi_1$ and $\pi_2'$, crosses with $e'$ with respect to $\pi_1$ and $\pi_2$.
        \item An edge $e''$ that crosses with $e'$ with respect to $\pi_1$ and $\pi_2'$, crosses with $e$ with respect to $\pi_1$ and $\pi_2$.
        \item The edges $e$ and $e'$ do not cross w.r.t. $\pi_1$ and $\pi_2'$.
        \item Edge crossings between edges that are both not incident to neither $v$ nor $v'$ stay the same.
    \end{itemize}
    Hence, we did not increase the number of crossings by exchanging $v$ and $v'$ in $\pi_2$.
\end{proof}
Due to the above lemma, we fix in the rest of the paper $\pi_2^\text{dm}$ as the order of $\dummy$ determined by sorting $\dummy$ ascending by their neighbor's position in $\pi_1$. Dummy nodes with the same neighbor can be ordered arbitrarily.
If for any solution $\pi_2$, $\pi_2[\dummy]\ne \pi_2^\text{dm}$, we can transform $\pi_2$ into $\pi_2'$ having properties (1)-(3) of \cref{lemma:ordvirt} and  with $\pi_2'[\dummy]=\pi_2^\text{dm}$.

Now, given an algorithm $\mathcal{A}$ for \oscm\ with approximation ratio $\alpha$, we get an approximation algorithm for \oscmsg\ with the same approximation ratio, as given below.
\begin{restatable}{theorem}{thmoscmsg}\label{thm:oscmsg}
    Let $\mathcal{A}$ be an algorithm for \oscm\ with approximation ratio $\alpha$ and runtime $\mathcal{O}(f(n,m))$. Then there exists an algorithm $\mathcal{B}$ for \oscmsg\ with approximation ratio $\alpha$ and runtime $\mathcal{O}(f(n,m)+m)$.
\end{restatable}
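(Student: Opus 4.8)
The plan is to reduce \oscmsg\ on $G$ to classical \oscm\ on the subgraph $G[\real]$ induced by the real nodes, to call $\heuristicAlgorithmSymb$ exactly once on this subinstance, and then to reinsert the dummy nodes as at most two side gaps in linear time. By \cref{lemma:ordvirt} we may assume that the dummy nodes occur in the fixed sorted order $\piTwoDummy$; a side-gap permutation $\pi_2$ is then completely described by a permutation $\piTwoReal \in \Pi(\real)$ of the reals, a split of $\piTwoDummy$ into a prefix (the left gap) and a suffix (the right gap), and the two positions among the real nodes at which these two blocks are inserted (the two positions may coincide, giving a single gap, and either block may be empty). For such a $\pi_2$ the number of crossings decomposes into (i) crossings between two real-incident edges, which equal $\crossings(G[\real],\pi_1,\piTwoReal)$ and hence depend only on $\piTwoReal$; (ii) crossings between a real-incident edge and a dummy-incident edge; and (iii) crossings between two dummy-incident edges, which are $0$, again by \cref{lemma:ordvirt} because $\piTwoDummy$ is sorted and the two gaps form a prefix and a suffix of it. Since every dummy node has degree one, part (ii) is, for a fixed $\piTwoReal$, a sum over the real nodes of a ``cost of lying before, between, or after the two dummy blocks'', where each such cost is a simple precomputable function of that node's neighbourhood and of the chosen split.

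The algorithm $\mathcal{B}$ then works as follows. It runs $\heuristicAlgorithmSymb$ on $G[\real]$, obtaining in time $\bigO(f(n,m))$ a permutation $\piTwoReal$ whose number of real--real crossings is at most $\approximationRatioSymbol$ times the \oscm-optimum of $G[\real]$. Using prefix and suffix sums over $\piTwoReal$ of the per-node costs above, it computes the split of $\piTwoDummy$ and the two insertion positions that minimise part (ii) for this $\piTwoReal$; once the relevant sums are tabulated this can be carried out in $\bigO(m)$ time, and it is important that an optimal placement may put a gap strictly inside the real sequence and not only at its ends. Finally, $\mathcal{B}$ outputs the resulting side-gap permutation. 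Building $G[\real]$, the tabulation, and the reinsertion step are all linear, so the overall running time is $\bigO(f(n,m)+m)$.

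It remains to bound the approximation ratio, which I expect to be the technically delicate part. Because $\heuristicAlgorithmSymb$ is an $\approximationRatioSymbol$-approximation, part (i) of $\mathcal{B}$'s output is at most $\approximationRatioSymbol$ times the \oscm-optimum of $G[\real]$, and this optimum is a lower bound on part (i) of any \oscmsg\ solution; the crux is to charge part (ii) to the optimum as well. A convenient reformulation is that an optimal side-gap permutation may be assumed to be \oscm-optimal within each maximal real block lying between two consecutive gaps -- reordering such a block to be \oscm-optimal neither increases the crossing number nor changes the gaps -- so that one may hope to bound parts (i) and (ii) of $\mathcal{B}$'s output against the two corresponding parts of $\crossings(G,\pi_1,\optimalPiTwo)$ for an optimal $\optimalPiTwo$ separately. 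The main obstacle is exactly this separation: since an optimal reinsertion may place a gap in the interior of the real sequence, part (ii) genuinely depends on the real order, so the real order and the dummy placement cannot be treated fully independently. Resolving this -- in particular showing that the only approximately \oscm-optimal order returned by $\heuristicAlgorithmSymb$ is not penalised by more than the factor $\approximationRatioSymbol$ when the dummies are reinserted optimally, including the interior-gap case -- is where the degree-one property of dummy nodes must be used, and it is the step I would expect to dominate the proof.
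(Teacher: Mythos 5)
Your first step---run $\heuristicAlgorithmSymb$ on $G[V_1\cup\real]$, fix the dummy order to $\piTwoDummy$ via \cref{lemma:ordvirt}, and split $\piTwoDummy$ into a prefix block and a suffix block---matches the paper. The proposal then goes astray on the geometry of the side gaps: you allow the two dummy blocks to be inserted at \emph{arbitrary} positions among the real nodes, you correctly observe that under that reading the dummy--real crossing count depends on the real order so the two parts of the objective cannot be decoupled, and you explicitly leave that step open. Since that unresolved step is the entire content of the approximation claim, the proposal as written does not prove the theorem. The difficulty is self-inflicted, however: in \oscmsg\ the two side gaps sit at the two \emph{ends} of the layer (``one gap on the left and one gap on the right of a layer''), which is also what distinguishes the side-gap variant from \oscmkg\ with $k=2$. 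Your interior-gap reading is essentially \oscmkg\ with $k=2$, which the paper must handle by a different theorem requiring the extra hypothesis that $\heuristicAlgorithmSymb$ be dummy-independent plus a dynamic program; the absence of that hypothesis here is already a signal that interior placements cannot be intended.

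With the gaps pinned to the extremes the ratio argument becomes easy, and this is the idea your write-up is missing. For a degree-one dummy node $u$, the number of crossings between its edge and all real-incident edges is $\crossings(G,\{u\},\real)$ if $u$ lies in the left gap and $\crossings(G,\real,\{u\})$ if it lies in the right gap, and both quantities depend only on $\pi_1$ and the multiset of $V_1$-endpoints of real-incident edges---\emph{not} on the order of $\real$. Hence the optimal left/right assignment of each dummy node is the same for every real order; $\mathcal{B}$ realizes it (the left/right preference is monotone along $\piTwoDummy$, so a single split point found by binary search suffices, giving the $\bigO(m)$ term), so the dummy-incident crossings of $\mathcal{B}$'s output are at most those of an optimal side-gap permutation, while its real--real crossings are at most $\approximationRatioSymbol$ times the \oscm-optimum of $G[V_1\cup\real]$, which lower-bounds the real--real crossings of any solution. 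Summing the two parts (using $\approximationRatioSymbol\ge 1$) gives the ratio. Note also that under your interior-gap reading even the claimed $\bigO(m)$ reinsertion is unjustified, since one would have to optimize jointly over the split point of $\piTwoDummy$ and two insertion positions.
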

\begin{proof}
    Given $\mathcal{A}$, we define $\mathcal{B}$.
    First, we apply $\mathcal{A}$ to the \oscm\ instance~$(G[V_1\cup \real], \pi_1)$ and obtains a permutation of $\real$.
    This determines the order $\pi_2[\real]$ of real nodes. It remains to place the nodes in $\dummy$ to the left or right of $\pi_2[\real]$.
    For a single node $u\in \dummy$, placing $u$ left of $\pi_2[\real]$ results in less crossings than placing it to right if $\crossings(\{u\},\real)<\crossings(\real,\{u\})$. 
    Notice that it follows that then $\crossings(\{u'\},\real)<\crossings(\real,\{u'\})$ for all $u'\in \dummy$ with $u'\prec_{\pi_2^{\text{dm}}}u$.
    Thus it is enough to find the rightmost $u$ in $\pi_2^{\text{dm}}$ such that $\crossings(\{u\},\real)<\crossings(\real,\{u\})$. Let $\pi_2^{dm}=A\star (u)\star B$, where $\pi_2^{dm}$ is defined by the neighbors of dummy nodes in $\pi_1$ ($A$ and $B$ can be empty).    
    We place $A\star (u)$ before $\pi_2[\real]$ and $B$ after $\pi_2[\real]$. 
    The node $u$ can be found by binary search. This can be done in $\mathcal{O}(m)$ time if we can query $\crossings(\{u\},\real)$ and $\crossings(\real,\{u\})$ in constant time, which is possible by precomputing the sum of degrees of nodes in each suffix and prefix of~$\pi_1$.
    It is left to show that the approximation ratio $\alpha$ remains. Consider an optimal solution $\pi_2^*$ to the \oscmsg\ instance $(G,\pi_1)$.
    Notice that a dummy node $u$ is in the left side gap in $\pi_2^*$ if and only if it is in the left side gap in the permutation $\pi_2$ computed by $\mathcal{B}$. This is the case as $\crossings(\{u\},\real)$ is independent of the order of real nodes and $\mathcal{B}$ placed $u$ such that it minimizes the number of crossings involving the edge incident to $u$. Due to $\mathcal{A}$ being and $\alpha$-approximation algorithm, crossings between edges $e,e'\in V_1\times\real$ are at most $\alpha$ times more in $\pi_2$ when compared with $\pi_2^*$. The remaining crossings sum up to the same amount. Thus, $\mathcal{B}$ is an $\alpha$-approximation.
\end{proof}
The key idea for the algorithm $\mathcal{B}$ is that we can in polynomial time compute the optimal placement of gap nodes and this placement is independent of the ordering of real nodes.
If $\mathcal{A}$ is for example the median heuristic \cite{eades1994oscmNPcomplete}, then \cref{thm:oscmsg} gives us a polynomial time $3$-approximation algorithm for \oscmsg. For an exact algorithm, we can substitute for $\mathcal{A}$ any exact algorithm for \oscm~($\alpha=1$) such as ILP formulations \cite{DBLP:conf/gd/JungerM95}.

\section{Approximation Algorithms for \oscmkg}\label{sec:oscmkg}
Adapting heuristics for \oscmkg\ is not as straight-forward. This is because once we have determined $\pi_2[\real]$, we have to consider all possibilities of inserting dummy nodes without having more than $k$ gaps. Furthermore, now the optimal placement of dummy nodes is dependent on $\pi_2[\real]$. We will only be able to extend \oscm\ heuristics with the following property.
\begin{definition}
    Let $\mathcal{A}$ be an algorithm for \oscm, $(G,\pi_1)$ be any instance of \oscm\ with $G=(V_1\dot{\cup} V_2,E)$. Consider a set of new nodes $V'$, $E'\subseteq V_1\times V'$, and $G'=(V_1\dot{\cup} (V_2\cup V'), E\cup E')$. Now apply $\mathcal{A}$ to $(G,\pi_1)$ and to $(G',\pi_1)$ to obtain solutions $\pi_2$ and $\pi_2'$, respectively. The algorithm $\mathcal{A}$ is \emph{dummy-independent} if $\pi_2[V_2]=\pi_2'[V_2]$ always holds.    
\end{definition}
Examples of dummy-independent algorithms are for example the barycenter-, and median-heuristic. By plugging $V'=\dummy$ in the above definition, we see that the order of real nodes computed by $\mathcal{A}$ is independent of the dummy nodes in $G$, when $\mathcal{A}$ is dummy-independent.

We can now extend any dummy-independent approximation algorithm $\mathcal{A}$ to \oscmkg\ maintaining the approximation ratio.
\begin{restatable}{theorem}{thmoscmkg}\label{thm:oscmkg}
    Let $\mathcal{A}$ be a dummy-independent algorithm for \oscm\ with approximation ratio $\alpha$ and runtime $\mathcal{O}(f(n,m))$. Then there exists an algorithm $\mathcal{B}$ for \oscmkg\ with approximation ratio $\alpha$ and runtime $\mathcal{O}(f(n,m)+|\real|\cdot|\dummy|^2\cdot k)$.
\end{restatable}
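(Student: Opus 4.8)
The plan is to first run the dummy-independent algorithm $\mathcal{A}$ on the real-node instance $(G[V_1\cup\real],\pi_1)$ to fix the order $\piTwoReal$ of real nodes, and then solve the remaining subproblem optimally: given $\piTwoReal$, insert the dummy nodes (in the fixed internal order $\piTwoDummy$, justified by \cref{lemma:ordvirt}) into at most $k$ gaps so as to minimize the total number of crossings. Because $\mathcal{A}$ is dummy-independent, its output on the full instance would give the same real order, so we do not lose anything by restricting to the real instance first; the only new work is the optimal dummy insertion, which I will argue costs $\bigO(|\real|\cdot|\dummy|^2\cdot k)$.

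\medskip

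For the dummy-insertion subproblem I would set up a dynamic program. Label the real nodes $r_1,\dots,r_{|\real|}$ in the order $\piTwoReal$, creating $|\real|+1$ ``slots'' (before $r_1$, between consecutive $r_i$, and after $r_{|\real|}$). A feasible placement with at most $k$ gaps chooses at most $k$ of these slots and, for each chosen slot, a contiguous block of the dummy sequence $\piTwoDummy = d_1,\dots,d_{|\dummy|}$, with the blocks appearing in order. So the DP state is $(i,j,g)$: we have placed the dummy prefix $d_1,\dots,d_j$ among the first $i$ slots using at most $g$ gaps. The transition from $(i,j,g)$ either skips slot $i{+}1$ — going to $(i{+}1,j,g)$ — or puts the block $d_{j+1},\dots,d_{j'}$ into slot $i{+}1$ — going to $(i{+}1,j',g{+}1)$, adding the crossing cost incurred by that block at that position. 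The number of states is $\bigO(|\real|\cdot|\dummy|\cdot k)$ and each has $\bigO(|\dummy|)$ transitions (the choice of $j'$), giving $\bigO(|\real|\cdot|\dummy|^2\cdot k)$ transitions overall, matching the claimed bound — provided each transition's cost can be evaluated in $\bigO(1)$.

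\medskip

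The crossing cost added when a block $B=\{d_{j+1},\dots,d_{j'}\}$ is inserted in slot $i{+}1$ splits into three independent pieces: (i) crossings among edges incident to dummies in $B$ — zero, by \cref{lemma:ordvirt} and the choice of $\piTwoDummy$; (ii) crossings between $B$'s edges and real edges, which decomposes as $\crossings(\{d\},r_1,\dots,r_i)$-type terms summed over $d\in B$, i.e.\ depends only on which reals lie left vs.\ right of slot $i{+}1$ — a quantity determined by $\piTwoReal$, independent of where the other dummies go; and (iii) crossings between $B$'s edges and edges incident to dummies placed in \emph{earlier} slots (all of $d_1,\dots,d_j$), which again factors out of the block structure. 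The key point is that all of (ii) and (iii) are sums over pairs and can be made $\bigO(1)$-queryable by precomputing prefix sums: for (ii), prefix sums of real-node degrees along $\pi_1$ as in the proof of \cref{thm:oscmsg}; for (iii), prefix sums over the neighbors of $d_1,\dots,d_{|\dummy|}$ in $\pi_1$. Incremental maintenance as $j'$ grows keeps each transition $\bigO(1)$.

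\medskip

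For correctness of the approximation ratio: the crossings in the computed solution partition into (a) real--real crossings, determined entirely by $\piTwoReal$, which is the output of the $\alpha$-approximation $\mathcal{A}$ on the real instance; and (b) all crossings involving at least one dummy edge, which our DP minimizes \emph{exactly} given $\piTwoReal$. In an optimal $\oscmkg$ solution $\pi_2^*$ with real order $\pi_2^*[\real]$, the real--real crossings are at least $\frac{1}{\alpha}$ times those of $\mathcal{A}$'s real-only output (since $\mathcal{A}$ is an $\alpha$-approximation for $\oscm$ and, by dummy-independence, its real order is unaffected by dummies), and the dummy-involving crossings of $\pi_2^*$ are at least those of our exact DP run on $\pi_2^*[\real]$ — but here the subtlety is that our DP is run on $\piTwoReal$, not on $\pi_2^*[\real]$. \textbf{The main obstacle} is precisely this: showing that the total is within a factor $\alpha$ even though the real order our algorithm commits to may differ from the optimum's. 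I would resolve it by bounding each crossing class separately — real--real crossings blow up by at most $\alpha$ by the $\oscm$ guarantee, while for the dummy-involving crossings I argue that the optimal \emph{dummy placement relative to a fixed real order} has cost that is, per dummy node, governed only by that node's neighbor position and the real order, so that plugging $\piTwoReal$ into the DP yields a value no larger than $\alpha$ times the corresponding term of $\pi_2^*$; summing the two bounded classes gives the overall factor $\alpha$. Care is needed to check that $k$-gap feasibility is preserved under this comparison, which it is, since the DP explores all feasible $\le k$-gap placements for \emph{any} given real order.
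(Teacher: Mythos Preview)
Your algorithm and dynamic program coincide with the paper's, and your runtime analysis is fine. The gap is in the approximation-ratio argument, precisely at the point you flag as \textbf{the main obstacle}. Your proposed resolution---bounding the real--real part and the dummy-involving part each by a factor $\alpha$ separately---does not go through. Part (A), that real--real crossings of $\piTwoReal$ are at most $\alpha$ times those of $\pi_2^*[\real]$, is correct. But part (B), that the DP's optimal dummy-insertion cost for $\piTwoReal$ is at most $\alpha$ times the dummy-involving crossings of $\pi_2^*$, is neither proved nor true in general: the sentence ``governed only by that node's neighbor position and the real order'' just restates that the cost depends on the real order, and nothing ties the dummy-insertion cost under $\piTwoReal$ to that under $\pi_2^*[\real]$ with any factor related to~$\alpha$. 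Indeed one can have instances where real--real crossings dominate and are (nearly) equal for both orders, while the optimal dummy-insertion cost for $\piTwoReal$ exceeds $\alpha$ times that for $\pi_2^*[\real]$; the overall ratio is still~$\le\alpha$, but not via your decomposition. You also never use dummy-independence in a substantive way---only to justify running $\mathcal{A}$ on the real-only instance, which is not where the leverage is.

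The paper closes the gap by a contraction argument that exploits dummy-independence directly. Take an optimal solution $\pi_2^{\text{opt}}$ with $c$ crossings and at most $k$ gaps; contract each gap into a single (multi-degree) node to obtain $G'$. The induced order on $G'$ witnesses that the \oscm\ optimum on $(G',\pi_1)$ is at most $c$, so applying $\mathcal{A}$ to $(G',\pi_1)$ yields $\pi_2'$ with at most $\alpha c$ crossings. Now the crucial step: by dummy-independence (with the contracted gap-nodes playing the role of the ``added'' vertices $V'$), $\pi_2'[\real]=\piTwoReal$. Uncontracting each gap-node back to its original block of dummies gives a permutation with real order exactly $\piTwoReal$, at most $k$ gaps, and at most $\alpha c$ crossings (after, if necessary, a cost-free reordering of dummies via \cref{lemma:ordvirt} to restore $\piTwoDummy$). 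This permutation lies in your DP's search space, so the DP---and hence $\mathcal{B}$---does at least as well, giving total cost $\le\alpha c$. The bound is on the \emph{total}, not on the two parts separately; that is what makes it work.
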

\begin{proof}
    The algorithm $\mathcal{B}$ first determines $\pi_2^\text{r}:=\pi_2[\real]$ by applying $\mathcal{A}$ to the \oscm\ instance~$(G[V_1\cup \real], \pi_1)$. We define a dynamic program to merge the two orders $\pi_2^{\text{r}}$ and $\pi_2^{\text{dm}}$.
    The dynamic programming table $DP$ contains entries $DP[g,i,j]$ which represents the minimum number of crossings between edge pairs $e,e'$, $e$ being incident to $\real$ and $e'$ being incident to $\dummy$, using at most $g$ gaps when merging the first $i$ nodes in $\pi_2^\text{r}$ and the first $j$ nodes in $\pi_2^\text{dm}$; further, $g$ goes from $0$ to $k$, $i$ goes from $0$ to $|\real|$, and $j$ goes from $0$ to $|\dummy|$.
    The base cases are 
    \begin{itemize}
        \item $DP[0,i,0]=0$ for $0\le i\le |\real|$,
        \item $DP[0,i,j]=\infty$ for $0\le i\le |\real|$, $1\le j\le |\dummy|$,
    \end{itemize}
    and the transitions for $g>0$ can be computed as
    \begin{align*}
        DP[g,i,j] =& \min_{0\le j'\le j}[DP[g-1,i,j']\\
        &+\crossings(G,\pi_2^\text{r}[1:i],\pi_2^\text{dm}[j'+1:j])+\crossings(G,\pi_2^\text{dm}[j'+1:j], \pi_2^\text{r}[i+1:|\real|])].
    \end{align*}
    The optimal number of crossings can be read from $DP[k,|\real|,|\dummy|]$, and the corresponding permutation can be reconstructed from the entries in $DP$.
    The runtime can be achieved by precomputing $\crossings(G,\pi_2^\text{r}[1:i],\pi_2^\text{dm}[j'+1:j])$ and $\crossings(G,\pi_2^\text{dm}[j'+1:j], \pi_2^\text{r}[i+1:|\real|])$.

    For correctness, consider an optimal solution $\pi_2^{\text{opt}}$ with $c$ crossings. By \cref{lemma:ordvirt} no edge pairs incident to $\dummy$ cross. Now contract each set of dummy nodes that appear in a gap together, obtaining the graph $G'$. Apply $\mathcal{A}$ to $(G',\pi_1)$, obtaining a solution $\pi'_2$ with at most $\alpha\cdot c$ crossings. Now revert the contraction and replace each contracted node by its original sequence of dummy nodes in $\pi'_2$. The newly obtained permutation is in the solution space of the dynamic program because $\pi_2^{\text{r}}=\pi_2'[\real]$ ($\mathcal{A}$ is dummy-independent), hence we have an $\alpha$-approximation.
\end{proof}

\section{Integer Linear Programs}\label{appendix:ilp}
In order to compare the results of the heuristic algorithms to an optimal solution, we defined an integer linear program (ILP) for the \oscmkg\ problem.
There is no need to develop a specialized formulation for \oscmsg\, since we can determine the relative order of real nodes using an existing ILP formulation, for example the one defined in \cite{DBLP:conf/gd/JungerM95}, and apply the algorithm described in \cref{thm:oscmsg}.

\subsection{\oscmkg}\label{sec:ilpkg}

Exact algorithms for \oscm\ are not dummy-independent. Hence, \cref{thm:oscmkg} cannot be applied to such algorithms. Hence, we propose an exact ILP formulation for \oscmkg.

The key insight which will make this formulation efficient is \cref{lemma:ordvirt}, which already determines the order $\pi_2^\dummySuperscript$ of dummy nodes. We will count the number of times there are real nodes between a pair of adjacent (in $\pi_2^\dummySuperscript$) dummy nodes, which will correspond to  the number of gaps minus one.
Thus let $N=\{(u,v)\mid u,v\in \dummy, u\text{ is the predecessor of }v\text{ in }\pi_2^\dummySuperscript\}$.
For an instance $(G,\pi_1,k)$ of \oscmkg, let $V_2=\{v_1,v_2,\dots,v_p\}$.
The formulation (given in \cref{ilp:oscmkg}) makes use of the following variables with the given semantics.
\begin{itemize}
    \item A binary variable $x_{ij}$ for each $1\le i,j\le p, i\ne j$. We have $x_{ij} = 1$ if and only if $\permComesBefore{\pi_2}{v_i}{v_j}$.
    \item A binary variable $g_{ij}$ for each $(v_i,v_j)\in N$. We have $g_{ij}=1$ if there is a real node between $v_i$ and $v_j$ in $\pi_2$.
\end{itemize}

In all equations below, assume that $i,j,k$ are pairwise different.
\begin{ilp}[\oscmkg]\label{ilp:oscmkg} \ \\ 
    \begin{align}
    \text{minimize:}\quad  \sum_{v_i, v_j \in V_2} x_{ij} \cdot \crossings(G,\{v_i\},\{v_j\}) \\
    \text{subject to:}\quad x_{ij} + x_{ji}          = 1 &  \ilpEquationSpacing v_i, v_j \in V_2 \label{ilp:k-gaps:constr:bool-order-mutex} \\
        x_{ij} + x_{jk} + x_{ik} \le 3 & \ilpEquationSpacing v_i, v_j, v_k  \in V_2 \label{ilp:k-gaps:constr:order-transitivity} \\
       x_{ij}=1&\ilpEquationSpacing (v_i,v_j)\in N\label{ilp:k-gaps:constr:ordervirtual}\\
       g_{ij}\ge x_{ik}+x_{kj}-1&\ilpEquationSpacing (v_i,v_j)\in N, v_k\in \real\label{ilp:k-gaps:constr:gapconstr}\\
       \sum_{(v_i,v_j)\in N}g_{ij}\le k-1&\label{ilp:k-gaps:constr:gapcount}
    \end{align}
\end{ilp}

The constraints in \cref{ilp:oscmkg} have the following meanings:
\begin{itemize}
    \item \eqref{ilp:k-gaps:constr:bool-order-mutex} ensures the anti-symmetry of $\pi_2$.
    \item \eqref{ilp:k-gaps:constr:order-transitivity} ensures the transitivity of $\pi_2$.
    \item \eqref{ilp:k-gaps:constr:ordervirtual}  ensures that the order of virtual nodes is as in \cref{lemma:ordvirt}.
    \item \eqref{ilp:k-gaps:constr:gapconstr} ensures that $g_{ij}$ assumes 1 if there is a real node $v_k$ between $v_i$ and $v_j$ in $\pi_2$.
    \item \eqref{ilp:k-gaps:constr:gapcount} limits the total amount of gaps to $k$.
\end{itemize}

The objective function minimizes the total number of edge crossings by summing the values $\crossings(G,\{v_i\},\{v_j\})$ for $i,j$ where $x_{ij} = 1$, where $\crossings(G,\{v_i\},\{v_j\})$ is the number of edge crossings between edge pairs incident to $v_i$ and $v_j$ when $v_i$ is placed before $v_j$.
The permutation $\pi_2$ can be read off from the values of the $x_{ij}$ variables. That is, $v_i\prec_{\pi_2}v_2$ iff.\ $x_{ij}=1$.

\section{Experiments}\label{section:experiments}

To test the practical application of the algorithms presented in this paper, we have decided to implement them using Python.

\subparagraph*{Test data.}
To test the performance of the algorithms in terms of execution time and in terms of edge crossing count after minimization, ``random'' bipartite graphs of various dimensions were generated.
We have outlined the graph generation process in \cref{alg:random-graph-gen}.

\newcommand{\varTotalNodesPerLayer}{\ensuremath{n}}
\newcommand{\varDummyNodeRatio}{\ensuremath{f_{\dummySuperscript}}}
\newcommand{\varAverageNodeDegree}{\ensuremath{\mathit{deg}_{\text{avg}}}}
\newcommand{\varDummyNodesPerLayer}{\ensuremath{n_{\dummySuperscript}}}
\newcommand{\varRealNodesPerLayer}{\ensuremath{n_{\realSuperscript}}}

\begin{algorithm}[t]
    \KwIn{A positive integer $\varTotalNodesPerLayer$ specifying how many nodes should be generated for each layer. A floating point number $0 \le \varDummyNodeRatio \le 1$ specifying what fraction of nodes generated should be dummy. A positive number $\varAverageNodeDegree$ specifying the average degree a real node should have.}
    \KwOut{A ``random'' two layered graph $G$ with the given specifications.}
    $\varDummyNodesPerLayer = \floor{\varTotalNodesPerLayer \cdot \varDummyNodeRatio }$ \;
    $\varRealNodesPerLayer = \varTotalNodesPerLayer - \varDummyNodesPerLayer$ \;
    initialize an empty 2-layered graph $G$ \;
    add $\varDummyNodesPerLayer$ real nodes to both layers of $G$ \;
    choose $\floor{\varRealNodesPerLayer \cdot max(\varRealNodesPerLayer, \varAverageNodeDegree)}$ edges from $V_1^{\realSuperscript} \times \vTwoReal $ uniformly at random and add them to $G$ \;
    create $\varDummyNodesPerLayer$ dummy nodes in the first and second layer \;
    add edges from each of these dummy nodes to a random other node in the opposite layer, ensuring that dummy nodes only have a single incident edge \;
    \Return{$G$} \;
    \caption{GENERATE-RANDOM-2-LAYER-GRAPH}
    \label{alg:random-graph-gen}
\end{algorithm}

\subparagraph*{Algorithms.}
We tested a set of $6$ algorithms for one-sided crossing minimization with gap constraints. These include the following:
\begin{description}
    \item[\textit{median sidegaps}] Algorithm of \cref{thm:oscmsg} applied to the median heuristic \cite{eades1994oscmNPcomplete}. 
    \item[\textit{barycenter sidegaps}] Algorithm of \cref{thm:oscmsg} applied to the barycenter heuristic \cite{sugiyama1981}.
    \item[\textit{ilp sidegaps}] Algorithm of \cref{thm:oscmsg} applied to an exact ILP formulation from \cite{DBLP:conf/gd/JungerM95}.
    \item[\textit{median kgaps}] Algorithm of \cref{thm:oscmkg} applied to the median heuristic \cite{eades1994oscmNPcomplete}. 
    \item[\textit{barycenter kgaps}] Algorithm of \cref{thm:oscmkg} applied to the barycenter heuristic \cite{sugiyama1981}.
    \item[\textit{ilp kgaps}] The ILP formulation from \cref{sec:ilpkg} for \oscmkg.
\end{description}
All ILP formulations were implemented in Gurobi, while transitivity constraints \cref{ilp:k-gaps:constr:order-transitivity} were implemented as ``lazy constraints '' that will be introduced by Gurobi using branch and cut.

\subparagraph*{Setup.}
Each experiment was executed with 20 random graph instances, with parameters set to the following values (unless varied): 40 nodes per layer, a dummy node fraction of 0.2, an average node degree of 3, using the median heuristic.
We set a time out of 5 minutes, however, no timeouts were recorded.

The tests were executed on a compute cluster with ``Intel Xeon E5-2640 v4'' processors running Ubuntu 18.04.6.
The Python version used is 3.12.4.
The integer linear programs were written using the Gurobi python API and solved by a local instance of the v11.0.1 Gurobi solver with the thread limit set to 1.

\subparagraph*{Results.}
For the following plots the $x$-axis is labeled with the varied parameter, while the other parameters remain fixed.
For experiments that include the ILP algorithm, there are four plots, one plotting the varied parameter against runtime, one plotting the varied parameter against the total crossing counts, and the same again but in terms of ratio compared to the optimal ILP solution. That is, the $2$-gap heuristics are compared to the $2$-gap ILP and the side-gap heuristics are compared to the side-gap ILP.
Plots with ``time'' on the $y$-axis which include ILP runs have a logarithmic $y$-axis.

\newcommand{\currExperimentVersion}{temp}
\newcommand{\experimentNameTypeToFileName}[2]{jakobthesis/graphics/experiment_plots/testcase_\currExperimentVersion_#1_#2.pdf}
\newcommand{\experimentSubFigure}[3]{
	\begin{subfigure}[b]{0.5\columnwidth}
		\centering
		\includegraphics[width=\textwidth]{\experimentNameTypeToFileName{#1}{#2}}
		\caption{#3}
		\label{fig:experiment:#1-#2}
	\end{subfigure}
}
\newcommand{\experimentFullPlots}[2]{

	\begin{figure}           
		\experimentSubFigure{#1}{crossings}{Crossing count.}
		\hfill %
		\experimentSubFigure{#1}{ratio_crossings}{Crossing ratio.}

		\experimentSubFigure{#1}{time_s}{Time.}
		\hfill %
		\experimentSubFigure{#1}{ratio_time_s}{Time ratio.}
		
		\caption{#2}
		\label{fig:experiment:#1}
	\end{figure}
}

\newcommand{\experimentNoRatioPlots}[2]{

	\begin{figure}
		\experimentSubFigure{#1}{crossings}{Crossings.}
		\hfill %
		\experimentSubFigure{#1}{time_s}{Time.}
		\caption{#2}
		\label{fig:experiment:#1}
	\end{figure}
}

\experimentFullPlots{2_gaps_vs_side_gaps}{Plots comparing the algorithms for \oscmsg\ with algorithms for \oscmkg\ with $k=2$.}
\cref{fig:experiment:2_gaps_vs_side_gaps} compares results for the side-gaps approach and the $k$-gaps approach with $k=2$.
Both ILP formulations never time out and show good performance with regard to runtime, at least for our variant of generating random instances.
It would be reasonable to expect that 2 arbitrary gaps would result in significantly fewer crossings compared to using only side-gaps; however, the difference is quite small.

\experimentFullPlots{k_gaps_count_variation}{Plots for \oscmkg\ with varying gap count $k$.}
\cref{fig:experiment:k_gaps_count_variation} compares results for different gap counts for the \oscmkg\ problem.
As can be seen in \cref{fig:experiment:k_gaps_count_variation-crossings}, increasing gap count yields diminishing returns.
That is, the crossings are drastically reduced when comparing two gaps with one gap. Increasing the gap count more does not increase the number of crossings, even when allowing arbitrary many gaps.

\section{Conclusion}
Further research is required to properly integrate our algorithms into the Sugiyama framework. In particular, adjustments might be required to guarantee few edge crossings over all layers, not just between a pair of layers. One might also investigate larger instances, i.e., with more vertices or higher edge densities.
Additionally, case studies could show how few gaps can reduce the amount of clutter in the layered graph drawing.

\bibliography{literature}

\end{document}